\documentclass[pra,showpacs,twocolumn,amsmath]{revtex4}
\usepackage{pifont}
\usepackage{graphicx,epsfig,subfigure,dsfont,amssymb,amsmath,amsthm,amsfonts,amsbsy,mathrsfs,amscd}

\DeclareMathOperator{\tr}{tr}

\newtheorem{theorem}{Theorem}

\newtheorem{lemma}{Lemma}

\def\la{\langle}
\def\ra{\rangle}

\input amssym.def

\begin{document}
\title{Remote Creation of Quantum Coherence}

\smallskip
\author{Teng Ma$^1$ }
\author{Ming-Jing Zhao$^2$}
\author{Shao-Ming Fei$^{3,4}$}
\author{Gui-Lu Long$^1$}

\affiliation{$^1$State Key Laboratory of Low-Dimensional Quantum Physics and Department of Physics, Tsinghua University, Beijing 100084, China\\
$^2$School of Science,
Beijing Information Science and Technology University, Beijing, 100192, China\\
$^3$School of Mathematical Sciences, Capital Normal
University, Beijing
100048, China\\
$^4$Max-Planck-Institute for Mathematics in the Sciences, 04103
Leipzig, Germany}

\pacs{03.65.Ud, 03.67.-a}

\begin{abstract}
We study remote creation of coherence (RCC) for a quantum system A with the help of quantum operations on another system B and one way classical communication. We show that all the non-incoherent quantum states are useful for RCC and all the incoherent-quantum states are not. The necessary and sufficient conditions of RCC for the quantum operations on system B are presented for pure states. The upper bound of average RCC is derived, giving a relation among the entanglement (concurrence), the RCC of the given quantum state and the RCC of the corresponding maximally entangled state. Moreover, for two-qubit systems we find a simple factorization law for the average remote created coherence.
\end{abstract}

\maketitle

\section{Introduction}

Quantum coherence is one of the fundamental features which distinguish quantum world from classical realm.
It is the origin for extensive quantum phenomena such as interference, laser, superconductivity \cite{london} and superfluidity \cite{migdal}.
It is an important subject related to quantum mechanics, from quantum optics \cite{mandel}, solid state physics \cite{london,migdal},
thermodynamics \cite{johan aberg}, to quantum biology \cite{engel}.
Coherence, together with quantum correlations like quantum entanglement \cite{horodecki}, quantum discord \cite{modi}, are
crucial ingredients in quantum  computation and  information tasks \cite{nielsen}.
Coherence  shines its quantum merits in quantum metrology \cite{giovannetti,dobranski}, quantum key distribution \cite{toshihiko sasaki}, entanglement creation \cite{asboth,streltsov}, etc.

Unlike quantum entanglement and other quantum correlations, coherence, regarded as a physical resource \cite{johan aberg,toshihiko sasaki,t baumgratz},
has been just investigated very recently in establishing the framework of quantifying coherence in the language of quantum information theory \cite{bromley,streltsov,e chitambar,yy,yuan,bera,girolami,mani,1,2,3,4,5,6,7,8,9,10}.
Due to its fundamental role in quantum physics and quantum information theory, it is still necessary to understand how coherence works in information processing and investigate the relations between coherence and quantum correlations.

In this paper, we consider the creation of coherence of a quantum system A, which initially has zero coherence, with the help of quantum channels on another quantum system B.
Such creation of coherence depends on the correlations between A and B, as well as the quantum operations on system B. We  establish an explicit relation among the
creation of the coherence on system A, the quantum entanglement (concurrence) between A and B,
and the quantum operations.

Remote creation of coherence (RCC) can be illustrated by a simple example. Consider  a two-qubit system AB, which is
initially in the maximally entangled state $(|00\rangle + |11\rangle)/\sqrt{2}$. If B undergoes a projective measurement under  basis $\{|\beta_0\ra, |\beta_1\ra\}$, and tells measurement outcome, for example $|\beta_0\ra$, to A,
the system A's final state would be a superposition of $|0\ra$ and $|1\ra$ with some probability,
if the basis $|\beta_0\ra$ is neither $|0\ra$ nor $|1\ra$. The same analysis also holds for measurement  outcome $|\beta_1\ra$. Namely, system A can gain an averaged coherence over all the outcomes.

For general states,  we first investigate what kind of states can be used to create remote coherence and present a necessary and sufficient condition. Then for pure states, we give a necessary and sufficient condition that the operations must satisfy for nonzero RCC.
And finally, we give upper bounds of RCC and investigate the relations between coherence and entanglement in our scenario.

Since a quantum state's coherence depends on the reference basis, throughout our paper, we fix the system A's reference basis to be the computational basis.
A well defined and mostly used coherence measure is the $l_1$ norm coherence $C_{l_1}$  \cite{t baumgratz}.
The $l_1$ norm coherence of a quantum state is defined as the sum of all off-diagonal elements of the state's density matrix under the reference basis, i.e.,
\begin{equation}\label{cl1}
C_{l_1}(\rho)=\sum_{i\neq j} |\rho_{ij}|,
\end{equation}
where $|\rho_{ij}|$ is the absolute value of $\rho_{ij}$.
In the following, we use the $l_1$ norm coherence when we discuss the upper bound of RCC and its relation to entanglement.

\section{Conditions for creating RCC}

Let ${\rho^{AB}}$ be a bipartite quantum state and $\$(\cdot)=\sum_n F_n(\cdot){F_n}^\dagger$  a quantum operation acting on the subsystem B.
Let ${\rho^A}'=\tr_B[(\mathbb{I}\otimes \$) {\rho^{AB}}]/\tr[(\mathbb{I}\otimes \$) {\rho^{AB}}]$ be the reduced state of the system A
after the operation. Concerning the quantum operations used for remote creation of coherence with a general quantum state, we have the following theorem.
\begin{theorem}\label{th1}
 Given a bipartite quantum state  ${\rho^{AB}}$, for any quantum operation $\$(\cdot)=\sum_n F_n(\cdot){F_n}^\dagger$ acting on the subsystem B,
the coherence of the final subsystem A $C({\rho^A}')=0 $ if and only if ${\rho^{AB}}$ is an incoherent-quantum state
${\rho^{AB}}=\sum_i p_i \sum_k q^i_k|k\ra \la k|\otimes \rho_i^B$.
\end{theorem}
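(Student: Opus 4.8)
The plan is to collapse the whole biconditional to a single algebraic condition on the off-diagonal blocks of $\rho^{AB}$ and then treat the two implications separately. First I would expand $\rho^{AB}$ in $A$'s reference basis, writing $\rho^{AB}=\sum_{ij}|i\ra\la j|\otimes B_{ij}$ with $B_{ij}=\la i|\rho^{AB}|j\ra$ operators on $B$ (so $B_{ji}=B_{ij}^\dagger$). Using $(\mathbb{I}\otimes\$)\rho^{AB}=\sum_n(\mathbb{I}\otimes F_n)\rho^{AB}(\mathbb{I}\otimes F_n^\dagger)$ and cyclicity of the trace, a direct computation gives the unnormalized output
\begin{equation}
\tr_B[(\mathbb{I}\otimes\$)\rho^{AB}]=\sum_{ij}\tr[M B_{ij}]\,|i\ra\la j|,\qquad M:=\sum_n F_n^\dagger F_n .
\end{equation}
Since any faithful coherence measure (in particular $C_{l_1}$) vanishes exactly on states diagonal in the reference basis, and renormalization does not change which entries vanish, the condition $C({\rho^A}')=0$ is equivalent to $\tr[MB_{ij}]=0$ for all $i\neq j$. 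This identity is the hinge of the argument.

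For the ``if'' direction I would first note that the incoherent-quantum form is exactly block diagonality in $A$'s basis: collecting terms, $\sum_i p_i\sum_k q^i_k|k\ra\la k|\otimes\rho_i^B=\sum_k|k\ra\la k|\otimes\big(\sum_i p_iq^i_k\rho_i^B\big)$ has $B_{ij}=0$ for $i\neq j$. Hence $\tr[MB_{ij}]=0$ trivially, for every choice of Kraus operators $\{F_n\}$, so $C({\rho^A}')=0$ for all operations. This direction is routine.

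The substantive direction is the converse, and the point I expect to be the main obstacle is that one must exploit the freedom of non--trace--preserving (selective) operations: if $\$$ were required to be trace preserving then $M=\mathbb{I}$ always and local operations on $B$ cannot touch ${\rho^A}$ at all, making the hypothesis vacuous. The renormalization in the definition of ${\rho^A}'$ signals that each $\$$ models a single measurement outcome, for which $M$ can be any positive contraction $0\le M\le\mathbb{I}$. Concretely, for a unit vector $|\phi\ra$ on $B$ the single Kraus operator $F=|\phi\ra\la\phi|$ is admissible with $M=|\phi\ra\la\phi|$, and the hinge identity forces $\la\phi|B_{ij}|\phi\ra=0$ for every such $|\phi\ra$ and every $i\neq j$. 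By polarization, an operator whose quadratic form vanishes on all vectors is zero, so $B_{ij}=0$ for all $i\neq j$; that is, $\rho^{AB}$ is block diagonal, hence incoherent-quantum, and writing $B_{ii}=p_i\rho_i^B$ recovers the stated form.

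A minor subtlety I would flag is that outcomes of zero probability leave ${\rho^A}'$ undefined. This is resolved either by restricting the probes to strictly positive $M$ (e.g.\ $M$ in an open neighbourhood of $\tfrac12\mathbb{I}$, which always occur with nonzero probability whenever $\rho^B\neq0$ and still linearly span the Hermitian operators, yielding $\tr[X B_{ij}]=0$ for all Hermitian $X$ and hence $B_{ij}=0$), or by invoking positivity of $\rho^{AB}$ on the orthogonal complement of $\mathrm{supp}(\rho^B)$ to cover the vanishing-probability vectors. Thus the only delicate ingredients are the correct reading of ``quantum operation'' as a possibly subnormalized measurement branch and the elementary fact about vanishing quadratic forms; everything else is the bookkeeping leading to the hinge identity.
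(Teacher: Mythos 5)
Your proposal is correct, and while it rests on the same hinge as the paper, it completes the decisive step by a genuinely different mechanism. The paper's Eq.~(\ref{mixrhoap}) is exactly your identity $\tr_B[(\mathbb{I}\otimes\$)\rho^{AB}]=\sum_{ij}\tr[MB_{ij}]\,|i\ra\la j|$ written in components (its $\sum_{jl}p_{ij,kl}N_{lj}$ is $\tr[NB_{ik}]$), and the ``$\Leftarrow$'' direction is the same routine check in both. The difference lies in how one deduces $B_{ik}=0$ from the vanishing of $\tr[NB_{ik}]$ for all admissible $N$: the paper decomposes the Hermitian matrix $N$ into independent real parameters (arbitrary diagonal entries, then real and imaginary parts $a_{lj},b_{lj}$ of the off-diagonal entries) and extracts $p_{ij,kj}=0$ and $p_{ij,kl}\pm p_{il,kj}=0$ from linear independence, whereas you probe with the single-Kraus operations $F=|\phi\ra\la\phi|$, turn the hinge identity into the vanishing of the quadratic form $\la\phi|B_{ik}|\phi\ra$ for every $|\phi\ra$, and invoke complex polarization. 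Your route is coordinate-free, shorter, and makes transparent that rank-one probes already suffice; the paper's component computation carries the same content with more bookkeeping. You also close an edge case the paper passes over silently: for operations with $\tr[M\rho^B]=0$ the state ${\rho^A}'$ is undefined and the hypothesis is vacuous, and either of your fixes (probes in a neighbourhood of $\tfrac12\mathbb{I}$, which still span the Hermitian operators, or positivity of $\rho^{AB}$ forcing $(\mathbb{I}\otimes\la\phi|)\rho^{AB}(\mathbb{I}\otimes|\phi\ra)=0$ whenever $\la\phi|\rho^B|\phi\ra=0$) is a legitimate repair, so your argument is if anything slightly more complete than the published one.
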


\begin{proof}
Let ${\rho^{AB}}=\sum_{ijkl}p_{ij,kl}|i\ra \la k|\otimes |j\ra \la l|$ be the quantum state  of system AB under the computational basis,
then the marginal state  $\rho^A=\tr_B{\rho^{AB}}=\sum_{ikj} p_{ij,kj}|i\ra \la k|$.
After the quantum operation $\$$ acting on B, $\rho^{AB}$ becomes,
\begin{equation}\nonumber
\begin{aligned}
{\rho^{AB}}'=&(\mathbb{I}\otimes \$) {\rho^{AB}}/p' \\
=&\frac{1}{p'} \sum_{ijkl} p_{ij,kl}|i\ra \la k|\otimes \sum_n F_n |j\ra \la l| {F_n}^\dagger,
\end{aligned}
\end{equation}
where $p'=\tr [(\mathbb{I}\otimes \$) {\rho^{AB}}]$ is the probability of getting the state ${\rho^{AB}}'$. Tracing over the system B, we get the final state of system A,
\begin{equation}\label{mixrhoap}
 {\rho^A}'=1/p' \sum_{ik} (\sum_{jl}p_{ij,kl} N_{lj}) |i\ra \la k|,
\end{equation}
where $N_{lj}=\la l|N|j\ra$ and $N=\sum_n {F_n}^\dagger {F_n}$.

``$\Rightarrow$''.~From equations (\ref{cl1}) and (\ref{mixrhoap}), for any operation $\$$, $C({\rho^A}')=0$ means that for any $N$ and $i\neq k$, $\sum_{jl} p_{ij,kl} N_{lj}=0$, i.e.,
\begin{equation}\label{mix1}
\sum_j p_{ij,kj} N_{jj}+\sum_{j< l}p_{ij,kl} N_{lj}+\sum_{j< l} p_{il,kj} N_{jl}=0.
\end{equation}
 For the arbitrary Hermitian operator $N\leq \mathbb{I}$ (the operation $\$$ is aribitray), $N_{jl}={N^*_{lj}}$ and the diagonal entries
of $N$ are arbitrary real numbers which are independent of the off diagonal entries.
Thus from (\ref{mix1}), we get $p_{ij,kj}=0$, $\forall j, i\neq k$. On the other hand, for all $i\neq k$,
$$ \sum_{j< l}p_{ij,kl} N_{lj}+ p_{il,kj} {N^*_{lj}}=0.$$
Set $N_{lj}=a_{lj}+b_{lj}I$, where $I=\sqrt{-1}$ is the unit imaginary number. Substituting it into the above equation, we get
\begin{equation}\label{mix2}
\begin{aligned}
\sum_{j<l}[(p_{ij,kl}+ p_{il,kj})a_{lj}+(p_{ij,kl}- p_{il,kj})b_{lj}I]=0.
\end{aligned}
\end{equation}
Since $N_{lj}$ $(j<l)$ is arbitrary, $a_{lj}$ and $b_{lj}$ are all independent. Then equation (\ref{mix2}) implies that for all $i\neq k, j<l$,
$p_{ij,kl}+ p_{il,kj}=p_{ij,kl}-p_{il,kj}=0$, i.e., $p_{ij,kl}=p_{il,kj}=0$, $\forall~  i\neq k,j<l$. Therefore $p_{ij,kl}=0$, $\forall j,l,i\neq k$.
Hence the initial state ${\rho^{AB}}$ becomes ${\rho^{AB}}=\sum_{ijl}p_{ij,il} |i\ra \la i|\otimes  |j\ra \la l|$ which actually is an incoherent-quantum state \cite{streltsov}.

``$\Leftarrow$''.~If ${\rho^{AB}}$ is an incoherent-quantum state, ${\rho^{AB}}=\sum_i p_i \sum_k q^i_k|k\ra \la k|\otimes \rho_i^B$.
It is easy to check that, for any operation $\$$ acting on system B, the final state of the system A,
${\rho^A}'=\tr_B[(\mathbb{I}\otimes \$) {\rho^{AB}}]/\tr[(\mathbb{I}\otimes \$) {\rho^{AB}}]$, has zero coherence.
\end{proof}

From the above proof one can see that $\forall~\$, C({\rho^A}')=0$ also means that the initial coherence of system A is also zero, $C(\rho^A)=0$.
The above theorem implies that any non-incoherent quantum state can be used for RCC under certain operations.
Note that the incoherent-quantum state actually is the classical-quantum correlated state with the fixed reference basis \cite{streltsov}.
Thus the states, that can be used to create coherence, are not limited within entangled states. According to theorem \ref{th1}
a non-incoherent-quantum separable state, with the system A having zero coherence, can also be used for RCC.

Interestingly, the condition in theorem \ref{th1} for creating nonzero RCC is the same as the distillable coherence of collaboration in \cite{e chitambar}.
However, it should be noticed that our scenario is different from the asymptotic scenario in \cite{e chitambar}. In \cite{e chitambar}, they study the maximal distilled coherence of collaboration under local quantum-incoherent operations and bilateral  classical communications for the asymptotic case.
While our work investigates the system A's average coherence after system B going  through a \emph{certain} quantum channel (see the rest of our paper),
and the RCC in our scenario only requires one way classical communication and does not involve the maximal process \cite{e chitambar}.

The next natural question is, for a non-incoherent quantum state, what is the exact form of the operation acting on system B
for the creation of coherence? Obviously for operation $\$=\sum_n F_n(\cdot) F_n^\dagger$ acting on system B such that $N=\sum_n F_n^\dagger F_n=\mathbb{I}/q$, where $q$
is an arbitrary real number bigger than 1, the coherence can not be created, which can be seen
by substituting $N=\mathbb{I}/q$ into equation (\ref{mixrhoap}). One can check that the important quantum operations including depolarizing operations, phase flip operations, bit flip operations and bit-phase flip operations all belong  to this form.
Moreover, all the trace preserving quantum channels will not create the coherence remotely,
and measurements on the system B are necessary in order to create nonzero coherence.
We give the necessary and sufficient condition for RCC in the following.
Let ${|\psi^{AB}\ra}$ be a pure bipartite entangled quantum state with zero coherence of the system A. We have

\begin{theorem}\label{th2}
 After a quantum operation $\$(\cdot)=\sum_n F_n(\cdot){F_n}^\dagger$ acts  on the system B, the system A, with initial coherence being zero, gains  coherence if and only if
there is a computational basis $|i\ra$ with $[N, (\la i| \otimes \mathbb{I})|\psi\ra^{AB} \la \psi|(|i\ra \otimes \mathbb{I}\ra)]\neq 0$,
where $|\psi\ra^{AB}$ is the initial state of AB, $N=\sum_n {F_n}^\dagger F_n$ and $[a,b]=ab-ba$ is the Lie bracket.
\end{theorem}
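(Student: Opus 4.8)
The plan is to reduce the whole statement to the reduced matrix $\rho^A{}'$ already computed in (\ref{mixrhoap}) and to exploit the rigid structure forced by the hypothesis that A is initially incoherent. First I would write $|\psi\rangle^{AB}=\sum_i|i\rangle\otimes|\phi_i\rangle$ with $|\phi_i\rangle=(\langle i|\otimes\mathbb{I})|\psi\rangle^{AB}$, so that the operator inside the commutator is exactly $M_i:=(\langle i|\otimes\mathbb{I})|\psi\rangle^{AB}\langle\psi|(|i\rangle\otimes\mathbb{I})=|\phi_i\rangle\langle\phi_i|$. The assumption that A carries no coherence, i.e. that $\rho^A=\sum_{ik}\langle\phi_k|\phi_i\rangle\,|i\rangle\langle k|$ is diagonal, forces $\langle\phi_k|\phi_i\rangle=0$ for $i\neq k$; hence the $|\phi_i\rangle$ are mutually orthogonal and, writing $|\phi_i\rangle=\sqrt{\lambda_i}\,|e_i\rangle$ with $\{|e_i\rangle\}$ orthonormal, one gets $M_i=\lambda_i|e_i\rangle\langle e_i|$. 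This Schmidt-type structure is the key input.

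Next I would specialize (\ref{mixrhoap}) to a pure state, where $p_{ij,kl}=c_{ij}c^*_{kl}$ with $|\phi_i\rangle=\sum_j c_{ij}|j\rangle$. A short computation identifies the $(i,k)$ entry of the unnormalized $\rho^A{}'$ with $\langle\phi_k|N|\phi_i\rangle=\sqrt{\lambda_i\lambda_k}\,\langle e_k|N|e_i\rangle$. Consequently A gains coherence, $C({\rho^A}')>0$, if and only if some off-diagonal entry survives, namely $\langle e_k|N|e_i\rangle\neq 0$ for some $i\neq k$. The theorem then amounts to showing that this is equivalent to $[N,M_i]\neq 0$ for some $i$.

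To close the equivalence I would use the elementary fact that, since $N$ is Hermitian, $[N,|e_i\rangle\langle e_i|]=0$ exactly when $|e_i\rangle$ is an eigenvector of $N$ (apply the commutator to $|e_i\rangle$ to obtain $N|e_i\rangle=\langle e_i|N|e_i\rangle|e_i\rangle$). For the forward direction, if $\langle e_k|N|e_i\rangle\neq 0$ with $k\neq i$ then $N|e_i\rangle$ has a component along $|e_k\rangle\neq|e_i\rangle$, so $|e_i\rangle$ is not an eigenvector and $[N,M_i]=\lambda_i[N,|e_i\rangle\langle e_i|]\neq 0$. For the converse, $[N,M_i]\neq 0$ forces $\lambda_i\neq 0$ and a nonzero part of $N|e_i\rangle$ orthogonal to $|e_i\rangle$; expanding that part in $\{|e_k\rangle\}$ yields some $k\neq i$ with $\langle e_k|N|e_i\rangle\neq 0$, which is precisely a surviving off-diagonal element.

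I expect the delicate point to be this last expansion: it requires that the component of $N|e_i\rangle$ orthogonal to $|e_i\rangle$ be captured by the vectors $\{|e_k\rangle\}$ that label A's basis, i.e. that one works on the support of $\rho^B$ so that $\{|e_k\rangle\}$ is complete on the relevant subspace. I would therefore make the support explicit, replacing $N$ by its compression $PNP$ with $P=\sum_i|e_i\rangle\langle e_i|$ the projector onto $\mathrm{supp}\,\rho^B$, and verify that only this compression enters $\rho^A{}'$; this guarantees that the characterization through the $M_i$ is both necessary and sufficient. The two remaining ingredients, the pure-state specialization of (\ref{mixrhoap}) and the eigenvector criterion for the commutator, are routine once the Schmidt structure is available.
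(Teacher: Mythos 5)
Your proposal is correct and, in its core, follows the same route as the paper's own proof: you derive the Schmidt form $|\psi^{AB}\ra=\sum_i\sqrt{\omega_i}\,|i\ra|\beta_i\ra$ from the vanishing initial coherence of A (the paper does this via the singular value decomposition $W=DV$ of the coefficient matrix), you reduce everything to the off-diagonal entries $\sqrt{\omega_i\omega_j}\,\la \beta_j|N|\beta_i\ra$ of the unnormalized ${\rho^A}'$, exactly as in Eq.~(\ref{purerhoap}), and you close with the commutator--eigenvector criterion for the Hermitian operator $N$.

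Where you genuinely depart from the paper is the ``delicate point'' you flag at the end, and you are right that it is not routine: it is a gap in the paper's own argument. The paper infers from ``$N_{ji}=0$ for all $j\neq i$'' that the $|\beta_i\ra$ are eigenvectors of $N$; this inference is valid only if the Schmidt vectors $\{|\beta_i\ra\}$ span all of B's Hilbert space. When $\mathrm{supp}\,\rho^B$ is a proper subspace, the ``if'' direction of the theorem as literally stated can fail. For instance, let A be a qubit and B a qutrit, $|\psi^{AB}\ra=(|0\ra|\beta_0\ra+|1\ra|\beta_1\ra)/\sqrt{2}$, and take the single Kraus operator $F=N=\frac{1}{2}(|\beta_0\ra+|\beta_2\ra)(\la\beta_0|+\la\beta_2|)$, a projector. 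Then $\la\beta_1|N|\beta_0\ra=\la\beta_0|N|\beta_1\ra=0$, so ${\rho^A}'=|0\ra\la 0|$ carries no coherence, yet $[N,(\la 0|\otimes\mathbb{I})|\psi\ra\la\psi|(|0\ra\otimes\mathbb{I})]=\frac{1}{4}(|\beta_2\ra\la\beta_0|-|\beta_0\ra\la\beta_2|)\neq 0$. Your compression $PNP$ onto $\mathrm{supp}\,\rho^B$ is exactly the right repair: only $PNP$ enters ${\rho^A}'$, and since $PNP$ maps the support into itself, your expansion of the orthogonal component of $PNP|e_i\ra$ in $\{|e_k\ra\}$ is legitimate, so the equivalence holds with $[PNP,M_i]$ in place of $[N,M_i]$. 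In short, your proof establishes a correctly amended version of the statement (the forward implication, coherence gained $\Rightarrow$ noncommutation, is fine for raw $N$ as well); the paper's version is valid only under the implicit assumption that the $|\beta_i\ra$ exhaust B's space, e.g.\ full Schmidt rank with $\dim A\geq\dim B$.
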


\begin{proof}
 With the local computational basis, the state of system AB can be expressed as $|\psi^{AB}\ra=\sum_{ij} w_{ij} |i\ra |j\ra$, with $\sum_{ij}|w_{ij}|^2=1$ the normalization condition.
One has $\rho^A=\tr_B|\psi\ra ^{AB}\la \psi|=\sum_{ijk}w_{ij}w_{kj}^*|i\ra \la k|$. Respect to $C(\rho^A)=0$, the rows of the coefficient matrix $W=(w_{ij})$ are mutually orthogonal.
Then the singular value decomposition of $W$ has a simple form, $W=DV$, where $D=(\sqrt{\omega_i})$ is a diagonal matrix with nonzero singular values
$\sqrt{\omega_i}$ (zero singular values are trivial for our proof), $V$ is a unitary matrix. Thus we get the Schmidt decomposition,
\begin{equation}\label{psiab}
|\psi^{AB}\ra=\sum_{ij}\sqrt{\omega_i} V_{ij}|i\ra |j\ra=\sum_{i}\sqrt{\omega_i} |i\ra |\beta_i\ra,
 \end{equation}
where $|\beta_i\ra=\sum_j V_{ij}|j\ra$, $\sum_i \omega_i=1$. We also have $(\la i| \otimes \mathbb{I})|\psi^{AB}\ra=\sqrt{\omega_i}|\beta_i\ra$.

After the local operation acting on the system B, the final state of the system A has the form,
\begin{equation}
\begin{aligned}\label{purerhoap}
{\rho^{A}}'=&\tr_B [(\mathbb{I}\otimes \$) |\psi\ra ^{AB} \la \psi|]/p' \\
=&\tr_B [(\mathbb{I}\otimes \$)\sum _{ij} \sqrt{\omega_i\omega_j} |i\ra \la j|\otimes |\beta_i\ra \la \beta_j|]/p' \\
=&\sum_{ij}  \sqrt{\omega_i\omega_j} |i\ra \la j| \tr [\sum_n F_n(|\beta_i\ra \la \beta_j|)F_n^\dagger]/p'\\
=&\sum_{ij}  \sqrt{\omega_i\omega_j} N_{ji} |i\ra \la j|/p',
\end{aligned}
\end{equation}
where  $N=\sum_n F_n^\dagger F_n$, $N_{ji}= \la\beta_j|N|\beta_i\ra$ and $p'=\tr[(\mathbb{I}\otimes \$)|\psi\ra ^{AB} \la \psi|]$, the probability of getting the state ${\rho^A}'$.
Since $\sqrt{\omega_i}\neq 0$ $\forall~i$,
all the off diagonal entries  of ${\rho^A}'$ vanish if and only if $N_{ji}=0, ~\forall ~j\neq i$, which means that
$|\beta_i\ra$s are just the  eigenvectors of $N$. Equivalently, $[N, \omega_i|\beta_i\ra \la \beta_i|]=0$, $\forall~i$.
Thus $C({\rho^A}')\neq 0$ if and only if there is a computational basis $|i\ra$ such that
$[N, \omega_i|\beta_i\ra \la \beta_i|]\neq 0$, where $\sqrt{\omega_i}|\beta_i\ra=(\la i| \otimes \mathbb{I})|\psi^{AB}\ra$.
\end{proof}

For a pure bipartite entangled state $|\psi^{AB}\ra$ with A's initial coherence being zero, we have presented an operational way
to determine whether $\$(\cdot)=\sum_n F_n(\cdot){F_n}^\dagger$ on system B can create system A's coherence.
In fact, from the proof we have also given the explicit form of the quantum operation that can not create  coherence  of system A.
Such operation $N$ has the form: $N=\sum_i n_i |\beta_i\ra \la \beta_i|$, where $n_i$s are arbitrary real numbers in $[0,1]$,
$|\beta_i\ra=(\la i| \otimes \mathbb{I})|\psi^{AB}\ra/\sqrt{ \la \psi^{AB}|(|i\ra \otimes \mathbb{I}\ra)(\la i| \otimes \mathbb{I})|\psi^{AB}\ra}$.

\section{Upper bound for RCC and its relation with Entanglement}

Theorem \ref{th2} shows the necessary and sufficient condition for quantum operations that create nonzero coherence.
Now we study how much RCC can be created and the relation between RCC and the entanglement between A and B.

\begin{lemma}\label{le1}
Under a  quantum operation $\$(\cdot)=\sum_n F_n(\cdot)F^{\dagger}_n$, the remote created coherence for a pure bipartite state $|\psi^{AB}\ra$  is bounded by
\begin{equation}\label{bound1}
C({\rho^A}')\leq \frac{E(|\psi^{AB}\ra)}{p'} \sqrt{\sum_{j<i}|N_{ji}|^2},
\end{equation}
where $C$ is the $l_1$ norm coherence, $E$ is the entanglement measure, concurrence, $p'$ is the probability of getting the state ${\rho^A}'$,
$N=\sum_n {F}^{\dagger}_n F_n\leq \mathbb{I}$ and $N_{ji}$ is the matrix elements under $|\psi^{AB}\ra$'s  Schmidt decomposition's basis of system B.
\end{lemma}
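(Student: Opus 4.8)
The plan is to start from the explicit expression for ${\rho^A}'$ already derived in the proof of Theorem~\ref{th2}, namely ${\rho^A}'=\sum_{ij}\sqrt{\omega_i\omega_j}\,N_{ji}\,|i\ra\la j|/p'$, and to bound its $l_1$ norm coherence directly. First I would write out the coherence as
\[
C({\rho^A}')=\sum_{i\neq j}\frac{\sqrt{\omega_i\omega_j}}{p'}\,|N_{ji}|,
\]
and then exploit the Hermiticity of $N=\sum_n F_n^\dagger F_n$. Since $N$ is Hermitian we have $N_{ji}=N_{ij}^*$, so $|N_{ji}|=|N_{ij}|$, and $\sqrt{\omega_i\omega_j}$ is symmetric in $i,j$; pairing the $(i,j)$ and $(j,i)$ contributions collapses the double sum into $C({\rho^A}')=\frac{2}{p'}\sum_{j<i}\sqrt{\omega_i\omega_j}\,|N_{ji}|$.

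The key analytic step is then a single application of the Cauchy--Schwarz inequality to this sum, viewing $\sqrt{\omega_i\omega_j}$ and $|N_{ji}|$ as the two sequences indexed by the pairs $j<i$. This yields
\[
\sum_{j<i}\sqrt{\omega_i\omega_j}\,|N_{ji}|\;\le\;\sqrt{\sum_{j<i}\omega_i\omega_j}\;\sqrt{\sum_{j<i}|N_{ji}|^2},
\]
so that $C({\rho^A}')\le\frac{2}{p'}\sqrt{\sum_{j<i}\omega_i\omega_j}\,\sqrt{\sum_{j<i}|N_{ji}|^2}$.

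To finish, I would identify the prefactor with the concurrence. For a pure bipartite state whose reduced state in the Schmidt basis is $\rho^A=\sum_i\omega_i|i\ra\la i|$, the concurrence is $E(|\psi^{AB}\ra)=\sqrt{2\,(1-\tr[(\rho^A)^2])}$. Using the normalization $\sum_i\omega_i=1$ one gets $1-\sum_i\omega_i^2=2\sum_{j<i}\omega_i\omega_j$, hence $E(|\psi^{AB}\ra)=2\sqrt{\sum_{j<i}\omega_i\omega_j}$, which is exactly the prefactor above. Substituting gives the claimed inequality $C({\rho^A}')\le\frac{E(|\psi^{AB}\ra)}{p'}\sqrt{\sum_{j<i}|N_{ji}|^2}$.

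I do not expect a genuine obstacle: the argument is essentially a Cauchy--Schwarz estimate combined with the closed form of the concurrence of a pure state. The only points that require care are the bookkeeping of the factor $2$ produced by the Hermitian pairing of off-diagonal terms, and checking that the chosen convention for the concurrence matches the identity $E=2\sqrt{\sum_{j<i}\omega_i\omega_j}$. It is also worth noting, for the sharpness of the bound, that equality holds precisely when the sequences $\sqrt{\omega_i\omega_j}$ and $|N_{ji}|$ are proportional over the pairs $j<i$.
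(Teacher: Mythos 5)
Your proposal is correct and follows essentially the same route as the paper: both derive $C({\rho^A}')=\frac{1}{p'}\sum_{i\neq j}\sqrt{\omega_i\omega_j}\,|N_{ji}|$ from the expression for ${\rho^A}'$, apply the Cauchy--Schwarz inequality, and identify the prefactor with the pure-state concurrence $E=\sqrt{2(1-\tr[(\rho^A)^2])}$. The only cosmetic difference is that you exploit the Hermitian symmetry $|N_{ji}|=|N_{ij}|$ before applying Cauchy--Schwarz (restricting to pairs $j<i$) whereas the paper applies it over all $i\neq j$ and symmetrizes afterwards; the two orderings give the identical bound.
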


\begin{proof}
From  equations (\ref{cl1}) and (\ref{purerhoap}) we have,
\begin{equation}\label{crhoap}
C({\rho^A}')=1/p' \sum_{i\neq j}\sqrt{\omega_i\omega_j} |N_{ji}|.
\end{equation}
By Cauchy inequality $\sum_i a_i b_i\leq \sqrt{\sum_i {a_i}^2 \sum_i {b_i}^2}$, we  have
\begin{equation}\label{neq1}
\begin{aligned}
\sum_{i\neq j} \sqrt{\omega_i \omega_j} |N_{ji}|
\leq& \sqrt{\sum_{i\neq j} \omega_i \omega_j \sum_{i\neq j} |N_{ji}|^2}\\
=&\sqrt{2 \sum_{i\neq j} \omega_i \omega_j} \sqrt{ \sum_{i< j} |N_{ji}|^2}.
\end{aligned}
\end{equation}
On the other hand, the concurrence of the state (\ref{psiab}) is given by
$E(|\psi^{AB}\ra)=\sqrt{2(1-\tr [(\tr_B |\psi\ra ^{AB} \la \psi |)^2])}=\sqrt{2 \sum_{i\neq j} \omega_i \omega_j}$,
which together with  (\ref{crhoap}) and (\ref{neq1}) complete the proof.
\end{proof}

Under the operation $\$$ acting on the system B, one interesting thing is that the upper bound of RCC (\ref{bound1})
is proportional to the entanglement while inversely proportional to the probability of getting the final state.
Here, one should note that $N_{ji}$s depend on the local Schmidt decomposition basis, hence the bound (\ref{bound1}) not only
depends on the entanglement between A and B, but also the local basis of the system B, as
the coherence is a basis dependent quantity. Next, we investigate
the relations among the average RCC under an operation on the system B
for maximally entangled states, that for an arbitrarily given pure quantum state and the entanglement of this given state.

Consider a pure bipartite entangled state $|\psi^{AB}\ra$
and a trace preserving quantum channel $\$(\cdot)=\sum_n F_n(\cdot) {F_n}^\dagger$,
$\sum_n {F_n}^\dagger F_n =\mathbb{I}$. Under the channel $\$$ on the system B,
the state $|\psi^{AB}\ra$ becomes $(\mathbb{I}\otimes F_n )|\psi\ra ^{AB} \la \psi|(\mathbb{I}\otimes F_n)^\dagger/p_n'$
with probability $p_n'=\tr[(\mathbb{I}\otimes F_n )|\psi\ra ^{AB} \la \psi|(\mathbb{I}\otimes F_n)^\dagger]$.
Bob communicates each outcome to Alice such that, under the channel $\$$, Alice can gain an average coherence over all the outcomes,
$\overline{C}^{A|B}(|\psi^{AB}\ra)=\sum_n p_n' C(\tr_B[(\mathbb{I}\otimes F_n) |\psi\ra ^{AB} \la \psi|(\mathbb{I}\otimes F_n)^\dagger]/p_n')$.

\begin{theorem}\label{th3}
For a pure bipartite entangled state $|\psi^{AB}\ra$ with zero coherence of system A,
under a trace preserving quantum channel $\$(\cdot)=\sum_n F_n(\cdot) {F_n}^\dagger$ on system B,
the average created coherence of system A satisfies the following relation:
\begin{equation}\label{bound2}
\overline{C}^{A|B}(|\psi^{AB}\ra)\leq \frac{d}{2}E(|\psi^{AB}\ra) \overline{C}^{A|B}(|\phi^{AB}\ra),
\end{equation}
where $|\phi^{AB}\ra$ is the maximal entangled state in the Schmidt decomposition  basis of $|\psi^{AB}\ra$,
$\overline{C}^{A|B}(|\psi^{AB}\ra)$ and $\overline{C}^{A|B}(|\phi^{AB}\ra)$ are
the average coherence of systems A under the channel $\$$ for states $|\psi^{AB}\ra$ and $|\phi^{AB}\ra$ respectively,
$C$ is the $l_1$ norm coherence, $d$ is the dimension of system A and $E$ is the concurrence.
\end{theorem}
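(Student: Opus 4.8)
The plan is to reduce everything to the single-outcome coherence formula (\ref{crhoap}) established inside the proof of Lemma \ref{le1}, and then to exploit the fact that averaging over the channel outcomes cancels the conditional probabilities. First I would fix the Schmidt form (\ref{psiab}), $|\psi^{AB}\rangle = \sum_i \sqrt{\omega_i}\,|i\rangle|\beta_i\rangle$, and take the maximally entangled partner $|\phi^{AB}\rangle = \frac{1}{\sqrt d}\sum_i |i\rangle|\beta_i\rangle$, i.e. the same state with all Schmidt coefficients equalized to $1/d$. For each Kraus operator $F_n$ of the trace-preserving channel I treat $F_n(\cdot)F_n^\dagger$ as a single-outcome operation with $N^{(n)}=F_n^\dagger F_n$; since $\sum_n F_n^\dagger F_n=\mathbb{I}$ we have $N^{(n)}\leq\mathbb{I}$, so Lemma \ref{le1} applies to each outcome separately.

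Next I would apply Lemma \ref{le1} outcome by outcome, obtaining $C({\rho_n^A}')\leq \frac{E(|\psi^{AB}\rangle)}{p_n'}\sqrt{\sum_{j<i}|N^{(n)}_{ji}|^2}$, where $N^{(n)}_{ji}=\langle\beta_j|N^{(n)}|\beta_i\rangle$. Multiplying by $p_n'$ and summing over $n$, the probabilities cancel and I get $\overline{C}^{A|B}(|\psi^{AB}\rangle)\leq E(|\psi^{AB}\rangle)\sum_n\sqrt{\sum_{j<i}|N^{(n)}_{ji}|^2}$. The crucial step is then to pass from this $\ell_2$-type quantity to an $\ell_1$-type one via the elementary inequality $\sqrt{\sum_k a_k^2}\leq\sum_k a_k$ for nonnegative $a_k$, yielding $\sum_n\sqrt{\sum_{j<i}|N^{(n)}_{ji}|^2}\leq \sum_n\sum_{j<i}|N^{(n)}_{ji}|$.

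Finally I would identify the right-hand side with the maximally entangled state's coherence. Applying (\ref{crhoap}) to $|\phi^{AB}\rangle$ (all $\omega_i=1/d$) and averaging gives $\overline{C}^{A|B}(|\phi^{AB}\rangle)=\frac{1}{d}\sum_n\sum_{i\neq j}|N^{(n)}_{ji}|=\frac{2}{d}\sum_n\sum_{i<j}|N^{(n)}_{ji}|$, so that $\sum_n\sum_{j<i}|N^{(n)}_{ji}|=\frac{d}{2}\overline{C}^{A|B}(|\phi^{AB}\rangle)$. Substituting this into the previous bound produces exactly (\ref{bound2}).

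I expect the main obstacle to be getting the constant right: a naive bound that estimates each Schmidt weight $\sqrt{\omega_i\omega_j}$ individually against the concurrence only yields a factor $1/\sqrt2$ rather than $1/2$, which is too weak. The sharper constant appears precisely because the Cauchy--Schwarz step is performed per outcome inside Lemma \ref{le1}, so the full concurrence $E$ is extracted once while the $N^{(n)}$ contributions are only afterwards relaxed from their $\ell_2$ to their $\ell_1$ norm; keeping these two relaxations separate, rather than bounding the weighted sum directly, is what delivers the factor $d/2$. A secondary point to verify is that the Schmidt rank of $|\psi^{AB}\rangle$ equals $d$, so that $|\phi^{AB}\rangle$ is genuinely $d$-dimensional maximally entangled; zero Schmidt coefficients, as already noted for Theorem \ref{th2}, can be discarded without affecting the argument.
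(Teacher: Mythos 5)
Your proof is correct and follows essentially the same route as the paper's: both apply Lemma \ref{le1} outcome by outcome with $N^{(n)}=F_n^\dagger F_n$, relax $\sqrt{\sum_{j<i}|N^{(n)}_{ji}|^2}$ to $\sum_{j<i}|N^{(n)}_{ji}|$ (the paper's inequality (\ref{neq2})), and identify that sum with $\tfrac{d}{2}$ times the maximally entangled state's coherence using the single-outcome formula (\ref{purerhoap}). The only cosmetic difference is that you sum over outcomes before performing the $\ell_2$-to-$\ell_1$ relaxation (thereby passing through the tighter bound (\ref{bound3}) stated in the paper's Remark), whereas the paper relaxes each outcome first and then sums; the two orderings are trivially equivalent.
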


\begin{proof}
By setting $\omega_i=1/d$, $\forall\, i$, in (\ref{psiab}) we get the maximally entangled state with respect to $|\psi^{AB}\ra$'s Schmidt basis,
\begin{equation}\label{beta}
|\phi^{AB}\ra=1/\sqrt{d} \sum_i |i\ra |\beta_i\ra,
\end{equation}
where $d$ is the dimension of system A. Under the channel
$\$$ on system B, the final states of system A corresponding to $|\psi^{AB}\ra$ and $|\phi^{AB}\ra$ are given by
$$
{\rho^A_n}'=\tr_B[(\mathbb{I}\otimes F_n) |\psi\ra ^{AB} \la \psi|(\mathbb{I}\otimes F_n)^\dagger]/p_n'
$$
and
$$
{\rho^A_n}''=\tr_B[(\mathbb{I}\otimes F_n) |\phi\ra ^{AB} \la \phi|(\mathbb{I}\otimes F_n)^\dagger]/p_n'',
$$
where $p_n'=\tr[(\mathbb{I}\otimes F_n )|\psi\ra ^{AB} \la \psi|(\mathbb{I}\otimes F_n)^\dagger]$ and
$p_n''=\tr[(\mathbb{I}\otimes F_n )|\phi\ra ^{AB} \la \phi|(\mathbb{I}\otimes F_n)^\dagger]$ are the probabilities of
getting the states ${\rho^A_n}'$ and ${\rho^A_n}''$ respectively.
Employing equation (\ref{purerhoap}), we get the coherence
$C({\rho^A_n}')= \sum_{i\neq j}\sqrt{\omega_i\omega_j} |N_{ji}|/p_n',$
and $C({\rho^A_n}'')= \frac{1}{d}\sum_{i\neq j}|N_{ji}|/p_n'',$ where $N=\sum_n {F_n}^\dagger F_n$ and $N_{ji}=\la \beta_j|N|\beta_i \ra$.

Utilizing the following relation
\begin{equation}\label{neq2}
\begin{aligned}
\sqrt{ \sum_{i< j} |N_{ji}|^2}
\leq  \sqrt{(\sum_{i< j} |N_{ji}|)^2}
= \frac{1}{2}\sum_{i\neq j} |N_{ji}|
\end{aligned}
\end{equation}
and Lemma \ref{le1}, we get the relation
\begin{equation}\nonumber
p_n'C({\rho^A_n}')\leq \frac{d}{2} E(|\psi^{AB}\ra) p_n'' C({\rho^A_n}'').
\end{equation}
Since $\$$ is trace preserving, the average coherence satisfies
\begin{equation}\nonumber
\sum_n p_n'C({\rho^A_n}')\leq \frac{d}{2} E(|\psi^{AB}\ra) \sum_n p_n'' C({\rho^A_n}''),
\end{equation}
which gives rise to the relation (\ref{bound2}).
\end{proof}

Theorem 3 shows that for a bipartite state $|\psi^{AB}\ra$ with initial zero coherence of system A, under a tracing preserving channel
of system B, the average increasing of the coherence of system A is bounded by the entanglement between systems A and B,
and the average coherence of system A for the maximally entangled state $|\phi^{AB}\ra$ under the same channel for the system B.
In fact, (\ref{bound2}) is also valid for the system going through some non trace-preserving channels with certain probabilities.
Let $\$_k(\cdot)=\sum_n F^k_n(\cdot) {F^k_n}^\dagger$ be a set of quantum operations such that
$\sum_k\sum_n {F^k_n}^\dagger F^k_n=\mathbb{I}$, i.e., each $\$_k$ is not a trace-preserving channel,
but all $\$_k$ together is. If the system B of a given state $|\psi\ra^{AB}$ and the maximally entangled state $|\phi^{AB}\ra$ go through the
operation $\$_k$ with probability $p_k'=\tr[(\mathbb{I}\otimes \$_k )|\psi\ra^{AB}\la \psi|]$ and
$p_k''=\tr[(\mathbb{I}\otimes \$_k )|\phi\ra^{AB}\la \phi|]$, respectively,
then one can prove that the average coherence $\overline{C}^{A|B}(|\psi^{AB}\ra)=\sum_k p_k' C(\tr_B[(\mathbb{I}\otimes \$_k) |\psi\ra ^{AB} \la \psi|]/p_k')$
and $\overline{C}^{A|B}(|\phi^{AB}\ra)=\sum_k p_k'' C(\tr_B[(\mathbb{I}\otimes \$_k) |\phi\ra ^{AB} \la \phi|]/p_k'')$
satisfy the relation (\ref{bound2}). Here, if each $\$_k$ is given by one Kraus operator, $\$_k(\cdot)=F_k(\cdot) {F_k}^\dagger$, then
one recovers the result for a trace preserving channel.
It should be noticed here that for the remote creation of the averaged coherence, the
one-way classical communication is required. Otherwise, one would end up with super-luminal signaling.

{\it Remark} Averaging the $\$_k$s for equation (\ref{bound1}) we can also obtain a tighter bound for the average RCC.
\begin{equation}\label{bound3}
\overline{C}^{A|B}(|\psi^{AB}\ra)\leq E(|\psi^{AB}\ra) \sum_k\sqrt{\sum_{j<i}|N^k_{ji}|^2},
\end{equation}
where $N^k=\sum_n {F_n^k}^\dagger F_n^k$ and $N_{ji}^k=\la \beta_j|N^k|\beta_i \ra$, with
$|\beta_i \ra$ as given in (\ref{beta}).
This inequality provides a tighter bound than (\ref{bound2}) and could be used  to estimate the average
RCC for a pure entangled state and a set of operations.

\begin{figure}
  \centering
  \label{fig1}
\includegraphics[width=7cm]{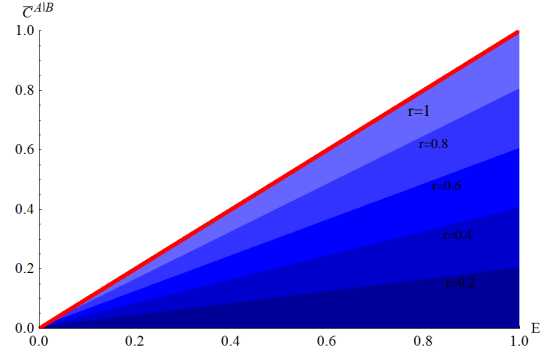}
\caption{
The relation between  entanglement (horizontal ordinate) and the average RCC (vertical ordinate).  We randomly generate over $2\times10^5$  pure states with the form of equation (\ref{psiab}) in $2\otimes 2$ system, of which the average RCCs  are under the phase damping channel $\{F_1=|0\ra \la 0|+\sqrt{1-r}|1\ra \la 1|, F_2=\sqrt{r}|1\ra \la 1|\}$ acting on the system B. The average RCCs, $\overline{C}^{A|B}(|\psi^{AB}\ra)$  (blue dots), get larger as the phase damping rate $r$ increases (the different darkness of blue dots stands for different values of $r$), while the ratio between the average RCC of a state and the average RCC of its corresponding maximal entangled state, $\overline{C}^{A|B}(|\psi^{AB}\ra)/ \overline{C}^{A|B}(|\phi^{AB}\ra)$ (red dots), is always equal to its entanglement for all $r$.}
\end{figure}

Nevertheless, the Theorem \ref{th3} gives a more explicit relation among the entanglement, RCCs of the given state and the corresponding maximally entangled state.
In particular, for two-qubit case, we have the following theorem:

\begin{theorem}\label{th4}
For a pure $2\otimes2$ entangled state $|\psi^{AB}\ra$ with zero coherence of system A,
under a trace preserving quantum channel $\$(\cdot)=\sum_n F_n(\cdot) {F_n}^\dagger$ on system B,
the average RCC of $|\psi^{AB}\ra$ equals to the product of the entanglement of $|\psi^{AB}\ra$ and the average RCC of $|\psi^{AB}\ra$'s corresponding maximally entangled state $|\phi^{AB}\ra$, i.e.,
\begin{equation}\label{eq22}
\overline{C}^{A|B}(|\psi^{AB}\ra)=E(|\psi^{AB}\ra)\, \overline{C}^{A|B}(|\phi^{AB}\ra).
\end{equation}
\end{theorem}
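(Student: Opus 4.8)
The plan is to revisit the two inequalities used in the proof of Theorem \ref{th3} and observe that, when $d=2$, both of them are automatically equalities, so the bound (\ref{bound2}) collapses to the identity (\ref{eq22}).

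First I would specialize the Schmidt form (\ref{psiab}) to a single nonzero pair, writing $|\psi^{AB}\ra=\sqrt{\omega_0}\,|0\ra|\beta_0\ra+\sqrt{\omega_1}\,|1\ra|\beta_1\ra$ with $\omega_0+\omega_1=1$, and correspondingly the maximally entangled companion $|\phi^{AB}\ra=(|0\ra|\beta_0\ra+|1\ra|\beta_1\ra)/\sqrt{2}$ as in (\ref{beta}). For each Kraus outcome $n$ I set $N^{(n)}=F_n^\dagger F_n$ and $N^{(n)}_{ji}=\la\beta_j|N^{(n)}|\beta_i\ra$, so that the A-states ${\rho^A_n}'$ and ${\rho^A_n}''$ obtained outcome by outcome from (\ref{purerhoap}) are $2\times2$ matrices whose only off-diagonal entries occupy the $(0,1)$ and $(1,0)$ slots.

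The key observation is that in $d=2$ the sums $\sum_{i\neq j}$ and $\sum_{i<j}$ appearing in (\ref{crhoap}), (\ref{neq1}) and (\ref{neq2}) each reduce to a single pair, while $N^{(n)}$ is Hermitian, forcing $|N^{(n)}_{01}|=|N^{(n)}_{10}|$. Hence the Cauchy--Schwarz step (\ref{neq1}) (a two-component inequality with proportional vectors) and the norm step (\ref{neq2}) (a single summand) both hold with equality, and (\ref{crhoap}) gives directly
\begin{equation}\nonumber
C({\rho^A_n}')=\frac{2\sqrt{\omega_0\omega_1}}{p_n'}\,|N^{(n)}_{10}|,\qquad
C({\rho^A_n}'')=\frac{1}{p_n''}\,|N^{(n)}_{10}|.
\end{equation}
Multiplying each coherence by its probability $p_n'$ (resp. $p_n''$) and summing over the outcomes $n$ then yields
\begin{equation}\nonumber
\overline{C}^{A|B}(|\psi^{AB}\ra)=2\sqrt{\omega_0\omega_1}\sum_n|N^{(n)}_{10}|
=2\sqrt{\omega_0\omega_1}\;\overline{C}^{A|B}(|\phi^{AB}\ra).
\end{equation}

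Finally I would identify the prefactor with the entanglement: the concurrence formula used in Lemma \ref{le1} and Theorem \ref{th3} gives $E(|\psi^{AB}\ra)=\sqrt{2\sum_{i\neq j}\omega_i\omega_j}=\sqrt{4\omega_0\omega_1}=2\sqrt{\omega_0\omega_1}$, which turns the displayed relation into (\ref{eq22}). I do not expect a genuine obstacle, since the entire content is that a two-dimensional Schmidt rank leaves a single off-diagonal pair and thereby saturates both inequalities at once; the only point requiring care is the per-outcome bookkeeping — using $N^{(n)}=F_n^\dagger F_n$ for each Kraus operator rather than the aggregate $N=\sum_n F_n^\dagger F_n$ — together with the Hermiticity fact $|N^{(n)}_{01}|=|N^{(n)}_{10}|$ that makes the two off-diagonal contributions coincide.
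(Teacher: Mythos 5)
Your proposal is correct and takes essentially the same route as the paper: the paper's one-sentence proof of Theorem \ref{th4} is exactly the observation that the sums in (\ref{neq1}) and (\ref{neq2}) contain a single entry when $d=2$, so both inequalities in the proof of Theorem \ref{th3} are saturated and the bound (\ref{bound2}) becomes the equality (\ref{eq22}). Your explicit per-outcome computation, including the careful use of $N^{(n)}=F_n^\dagger F_n$ for each Kraus operator and the Hermiticity fact $|N^{(n)}_{01}|=|N^{(n)}_{10}|$, simply fleshes out what the paper leaves implicit.
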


The proof for the above can be derived by noting that, in $2\otimes2$ case, the summation in  (\ref{neq1}) and (\ref{neq2}) only contains one entry which leads to that the equality holds in Theorem \ref{th3}.
The above factorization law manifests that, in a $2\otimes 2$ system, if the operations satisfy the nonzero RCC condition in Theorem \ref{th2}, then the  average RCC of the state is proportional to its entanglement. Since the entanglement is smaller than 1,  its average RCC is  always smaller than  its corresponding maximal entangled state's average RCC. Fig. 1 shows the relations between the average RCC and entanglement for $2\otimes2$ pure states under phase damping channel.

For higher dimensional systems, while the upper bound of the average RCC for a state is proportional to its entanglement, depending on the operations one chooses, its average RCC can exceeds its corresponding maximally entangled state's average RCC. Here it should be noted that the coherence depends on the reference basis, while the entanglement is local unitary invariant. Hence we fix the same basis for the maximally entangled state $|\phi^{AB}\ra$ to the one of $|\psi^{AB}\ra$'s Schmidt basis.

 Besides quantum entanglement, similar relations like (\ref{bound2}) or (\ref{bound3}) may also exist for other quantum correlations like quantum discord, as Theorem \ref{th1} implies that all the non-incoherent-quantum states are useful for RCC, other quantum correlations could be responsible for RCC either.

\section{Conclusions}

In conclusion, we have studied the coherence creation for a system A with zero
initial coherence, with the help of quantum operations on another system B that is correlated to A and one-way classical communication.
We have found that all the non-incoherent quantum states can be used for RCC and all the incoherent-quantum states can not. For pure states, the necessary and sufficient condition of RCC for the quantum operations on system B has been presented. The upper bound of average remote created coherence has been derived, which shows the relation among the entanglement and RCC of the given quantum state, and the RCC of the corresponding maximally entangled state. Moreover, for two-qubit systems, a simple factorization law for the average remote created coherence has been given.

\smallskip
\noindent{\bf Acknowledgments}\, \, We thank B. Chen, Y. K. Wang and S. H. Wang for useful discussions. This work is supported by the NSFC under numbers 11401032, 11275131, 11175094, 11675113 and 91221205, and the National Basic Research Program of China (2015CB921002).


\begin{thebibliography}{99}

\bibitem{london} F. London and  H. London, Proc. R. Soc. A \textbf{149}, 71 (1935).
\bibitem{migdal}A. B. Migdal,  Nucl. Phys.  \textbf{13}, 655 (1959).
\bibitem{mandel} L. Mandel and E. Wolf, \textit{Optical Coherence and Quantum Optics} (Cambridge University Press, Cambridge,
England, 1995).
\bibitem{johan aberg} J. Aberg, Phys. Rev. Lett. \textbf{113}, 150402 (2014).
\bibitem{engel}G. S. Engel, T. R. Calhoun, E. L. Read, T. K. Ahn, T. Man\v{c}al, Y. C. Cheng, R. E. Blakenship, and G. R. Fleming, Natrue (London) \textbf{446}, 782 (2007).
\bibitem{horodecki}R. Horodecki, P. Horodecki, M. Horodecki, and K. Horodecki,
Rev. Mod. Phys. \textbf{81}, 865 (2009).
\bibitem{modi}K. Modi, A. Brodutch, H. Cable, T. Paterek, and V. Vedral,
Rev. Mod. Phys. \textbf{84}, 1655 (2012).
\bibitem{nielsen} M. A. Nielsen and L. Chuang, \textit{Quantum Computation
and Quantum Information} (Cambridge University Press, Cambridge,
England, 2000).
\bibitem{giovannetti}V. Giovannetti, S. Lloyd, and L. Maccone, Science \textbf{306}, 1330 (2004).
\bibitem{dobranski}R. Demkowicz-Dobrzanski and L. Maccone, Phys. Rev. Lett. \textbf{113}, 250801 (2014).
\bibitem{toshihiko sasaki} T. Sashki, Y. Yamamoto, and M. Koashi, Natrue \textbf{509}, 475 (2014).
\bibitem{asboth}J. K. Asboth, J. Calsamiglia, and H. Ritsch, Phys. Rev. Lett. \textbf{94}, 173602 (2005).
\bibitem{streltsov}A. Streltsov, U. Singh, H. S. Dhar, M. N. Bera, and G. Adesso, Phys. Rev. Lett. \textbf{115}, 020403 (2015).
\bibitem{t baumgratz} T. Baumgratz, M. Cramer, and M. B. Plenio, Phys. Rev. Lett. \textbf{113}, 140401 (2014).
\bibitem{bromley} T. R. Bromley, M. Cianciaruso, and G. Adesso, Phys. Rev. Lett. \textbf{114}, 210401 (2015).
\bibitem{e chitambar} E. Chitambar, A. Streltsov, S. Rana, M. N. Bera, G. Adesso, and M. Lewenstein, Phys. Rev. Lett. \textbf{116}, 070402 (2016).
\bibitem{yy}Y. Yao, X. Xiao, L. Ge, and C. P. Sun, Phys. Rev. A \textbf{92}, 022112 (2015).
\bibitem{bera}M. N. Bera, T. Qureshi, M. A. Siddiqui, and A. K. Pati, Phys. Rev. A \textbf{92}, 012118 (2015).
\bibitem{girolami}D. Girolami, Phys. Rev. Lett. \textbf{113}, 170401 (2014).
\bibitem{mani}A. Mani and V. Karimipour, Phys. Rev. A \textbf{92}, 032331 (2015).
\bibitem{yuan}X. Yuan, H. Zhou, Z. Cao, and X. F. Ma, Phys. Rev. A \textbf{92}, 022124 (2015).

\bibitem{1}E. Bagan, J. A. Bergou, S. S. Cottrell, and M. Hillery, Phys. Rev. Lett. \textbf{116}, 160406 (2016).

\bibitem{2}J. J. Ma, B. Yadin, D. Girolami, V. Vedral, and M. L. Gu, Phys. Rev. Lett. \textbf{116}, 160407 (2016).

\bibitem{3}P. K. Jha, M. Mrejen, J. Kim, C. Wu, Y. Wang, Y. V. Rostovtsev, and X. Zhang, Phys. Rev. Lett. \textbf{116}, 160502 (2016).

\bibitem{4}K. F. Bu, U. Singh, and J. D. Wu, Phys. Rev. A \textbf{93}, 042326 (2016).

\bibitem{5}C. Radhakrishnan, M. Parthasarathy, S. Jambulingam, and T. Byrnes, Phys. Rev. Lett. \textbf{116}, 150504 (2016).

\bibitem{6}C. Napoli, T. R. Bromley, M. Cianciaruso, M. Piani, N. Johnston, and G. Adesso, Phys. Rev. Lett. \textbf{116}, 150502 (2016).

\bibitem{7}S. Barz, V. Dunjko, F. Schlederer, M. Moore, E. Kashefi, and I. A. Walmsley, Phys. Rev. A \textbf{93}, 032339 (2016).

\bibitem{8}A. Winter, and D. Yang, Phys. Rev. Lett. \textbf{116}, 120404 (2016).

\bibitem{9}U. Singh, L. Zhang, and A. K. Pati, Phys. Rev. A \textbf{93}, 032125 (2016).

\bibitem{10}Y. Peng, Y. Jiang, and H. Fan, Phys. Rev. A \textbf{93}, 032326 (2016).



\end{thebibliography}
\end{document}